\newcommand{\subparagraph}{}
\newtheorem{theorem}{Theorem}
\newtheorem{lemma}{Lemma}
\newtheorem{obs}{Observation}
\newcommand\RAPPOR{{RAPPOR}}
\newcommand\RAPPORSPELLEDOUT{{Randomized Aggregatable Privacy-Preserving Ordinal Response}}
\newcommand\USC{{University of Southern California}}
\title{\RAPPOR{}: \RAPPORSPELLEDOUT{}}
\author{{\'{U}}lfar Erlingsson, Aleksandra Korolova and Vasyl Pihur}
\author{
%
%
\alignauthor
{\'{U}l}far Erlingsson \\
       \affaddr{Google, Inc.} \\
       \email{ulfar@google.com}
\alignauthor
Vasyl Pihur \\
       \affaddr{Google, Inc.}\\
       \email{vpihur@google.com}
\alignauthor Aleksandra Korolova \\[0.2ex]
       \affaddr{\makebox[1em][c]{\USC}}\\[0.2ex]
       \email{korolova@usc.edu}
}
\newfont{\mycrnotice}{ptmr8t at 7pt}
\newfont{\myconfname}{ptmri8t at 7pt}
\begin{document}
\maketitle

\begin{abstract}
\RAPPORSPELLEDOUT{},
or \RAPPOR{}, is a technology for crowdsourcing statistics from end-user client software, 
anonymously, with strong privacy guarantees.  
In short, \RAPPOR{}s allow the forest of client data to be studied, 
without permitting the possibility of looking at individual trees.
By applying randomized response in a novel manner,
\RAPPOR{} provides the mechanisms for such collection as well as for efficient, 
high-utility analysis of the collected data.
In particular, \RAPPOR{} permits statistics to be collected
on the population of client-side strings
with strong privacy guarantees for each client,
and without linkability of their reports.

This paper
describes and motivates \RAPPOR{},
details its differential-privacy and utility guarantees,
discusses its practical deployment and properties in the face of different attack models,
and, finally, gives results of its application
to both synthetic and real-world data.
\end{abstract}

\section{Introduction}
Crowdsourcing data to make better, more informed decisions is becoming increasingly commonplace. 
For any such crowdsourcing,
privacy-preservation mechanisms should be applied 
to reduce and control the privacy risks introduced by the data collection process,
and balance that risk against the beneficial utility of the collected data.
For this purpose we introduce
\RAPPORSPELLEDOUT{},
or \RAPPOR{},
a widely-applicable, practical new mechanism that provides strong privacy guarantees 
combined with high utility,
yet is not founded on the use of trusted third parties.

\RAPPOR{} builds on the ideas of \textit{randomized response}, a surveying technique developed in the 1960s for collecting statistics on sensitive topics where survey respondents wish to retain confidentiality~\citep{warner}.
An example commonly used to describe this technique involves a question on a sensitive topic, such as ``Are you a member of the Communist party?''~\cite{WikipediaRR}. 
For this question, the survey respondent is asked to flip a fair coin, in secret, and answer ``Yes'' if it comes up heads, but tell the truth otherwise (if the coin comes up tails).
Using this procedure, each respondent retains very strong deniability for any ``Yes'' answers, since such answers are most likely attributable to the coin coming up heads; as a refinement, respondents can also choose the untruthful answer by flipping another coin in secret, and get strong deniability for both ``Yes'' and ``No'' answers.

Surveys relying on randomized response enable easy computations of accurate population statistics while preserving the privacy of the individuals. Assuming absolute compliance with the randomization protocol (an assumption that may not hold for human subjects, and can even be non-trivial for algorithmic implementations~\cite{mironov-CCS12}), it is easy to see that in a case where both ``Yes'' and ``No'' answers can be denied (flipping two fair coins), the true number of ``Yes'' answers can be accurately estimated by $2(Y - 0.25)$, where $Y$ is the proportion of ``Yes'' responses. 
%
%
In expectation,
respondents will provide the true answer 75\% of the time,
as is easy to see by
a case analysis of the two fair coin flips.

Importantly,
for one-time collection,
the above randomized survey mechanism
will protect the privacy
of any specific respondent,
irrespective of any attacker's prior knowledge, 
as assessed via the $\epsilon$-differential privacy guarantee~\cite{dwork06}.
Specifically,
the respondents will have
differential privacy at the level $\epsilon = \ln\bigl(0.75 / (1 - 0.75)\bigr) = \ln(3)$.
This said,
this privacy guarantee degrades
if the survey is repeated---e.g.,
to get fresh, daily statistics---and
data is collected multiple times from the same respondent.
In this case,
to maintain both differential privacy and utility,
better mechanisms are needed,
like those we present in this paper.

Privacy-Preserving Aggregatable Randomized Response, 
or \RAPPOR{}s, is a new mechanism for collecting statistics from end-user, client-side software, 
in a manner that provides strong privacy protection
using randomized response techniques.  
\RAPPOR{} is designed to permit collecting, over large numbers of clients, 
statistics on client-side values and strings, 
such as their categories, frequencies, histograms, 
and other set statistics.
For any given value reported,
\RAPPOR{}
gives 
a strong deniability guarantee for the reporting client,
which strictly limits private information disclosed,
as measured by an $\epsilon$-differential privacy bound,
and holds even for 
a single client that reports often on the same value.

%
A distinct contribution is \RAPPOR{}'s ability 
to collect statistics about an arbitrary set of strings 
by applying randomized response to Bloom filters~\cite{bloom} 
with strong $\epsilon$-differential privacy guarantees. 
Another contribution is the elegant manner
in which \RAPPOR{}
protects the privacy of clients 
from whom data is collected repeatedly (or even infinitely often), and
how \RAPPOR{}
avoids addition of privacy externalities,
such as those that might be created by
maintaining a database of contributing respondents (which might be breached),
or repeating a single, memoized response (which would be linkable, and might be tracked).
In comparison,
traditional randomized response does not provide any longitudinal privacy 
in the case when multiple responses are collected from the same participant. 
Yet another contribution is that the \RAPPOR{}  mechanism is performed locally on the client, and does not require a trusted third party.

Finally, \RAPPOR{}
provides a novel, high-utility 
decoding framework for learning statistics
based on a sophisticated combination of
hypotheses testing,
least-squares solving, and LASSO regression~\cite{lasso}.

\subsection{The Motivating Application Domain}\label{sec:motivation}
\RAPPOR{} is
a general technology for privacy-preserving data collection and crowdsourcing of statistics,
which could be applied in a broad range of contexts.

In this paper, however, we focus on the specific application domain
that motivated the development of \RAPPOR{}:
the need for Cloud service operators to collect up-to-date statistics 
about the activity of their users and their client-side software.
In this domain,
\RAPPOR{} has already seen limited deployment in Google's Chrome Web browser,
where it has been used
to improve the data sent by users that have opted-in to reporting statistics~\cite{ChromeRAPPORpage}.
Section~\ref{sec:chromehome} briefly describes this real-world application,
and the benefits \RAPPOR{} has provided
by shining a light on the unwanted or malicious hijacking of user settings.

For a variety of reasons,
understanding population statistics is a key part  
of an effective, reliable operation of online services
by Cloud service and software platform operators.
These reasons are often as simple as 
observing how frequently certain software features are used,
and measuring their performance and failure characteristics.
Another, important set of reasons
involve
providing better security and abuse protection to the users, their clients, and the service itself.
For example, to assess the prevalence of botnets or hijacked clients,
an operator may wish to monitor how many clients
have---in the last 24 hours---had critical preferences overridden, 
e.g., to redirect
the users' Web searches to the URL of a known-to-be-malicious search provider.
%

The collection of up-to-date crowdsourced statistics
raises a dilemma for service operators.
On one hand,
it will likely be detrimental to the end-users' privacy
to directly collect their information.
(Note that even the search-provider preferences of a user
may be uniquely identifying, incriminating, 
or otherwise compromising for that user.)
On the other hand,
not collecting any such information 
will also be to the users' detriment:
if operators cannot
gather the right statistics,
they cannot make
many software and service improvements that benefit users
(e.g., by detecting or preventing malicious client-side activity).
Typically, operators resolve this dilemma
by using techniques
that derive only the necessary high-order statistics,
using mechanisms
that limit the users' privacy risks---for example,
by collecting only coarse-granularity data, 
and by eliding data that is not shared by a certain number of users.

Unfortunately,
even for careful operators, 
willing to utilize state-of-the-art techniques,
there are few existing, practical mechanisms 
that offer both privacy and utility,
and even fewer that provide clear privacy-protection guarantees.
%
%
Therefore, 
to reduce privacy risks,
operators rely to a great extent on pragmatic means and processes,
that, for example, avoid the collection of data, 
remove unique identifiers, or otherwise systematically scrub data, 
perform mandatory deletion of data after a certain time period, 
and, in general, enforce access-control and auditing policies on data use.
However, these approaches are limited in their ability to provide provably-strong privacy guarantees. 
In addition, privacy externalities from individual data collections, 
such as timestamps or linkable identifiers, may arise;
the privacy impact of those externalities may
be even greater than that of the data collected.

\RAPPOR{} can help operators
handle the significant challenges, and potential privacy pitfalls,
raised by this dilemma.

\subsection{Crowdsourcing Statistics with \RAPPOR{}}
Service operators may apply \RAPPOR{}
to crowdsource statistics
in a manner that protects their users' privacy,
and thus address the challenges described above.

As a simplification,
\RAPPOR{} responses can be assumed to be \emph{bit strings},
where each bit corresponds to a randomized response 
for some logical predicate on the reporting client's properties, such as its values, context, or history.
(Without loss of generality,
this assumption
is used for the remainder of this paper.)
For example, one bit in a \RAPPOR{} response may correspond to a predicate 
that indicates the stated gender, male or female, of the client user,
or---just as well---their membership in the Communist party.

The structure of a \RAPPOR{} response need not be otherwise constrained;
in particular, 
(i) the response bits may be sequential, or unordered,
(ii) the response predicates may be independent, disjoint, or correlated,
and (iii) the client's properties may be immutable, or changing over time.
However,
those details (e.g., any correlation of the response bits)
must be correctly accounted for,
as they impact both the utilization and privacy guarantees of \RAPPOR{}---as
outlined in the next section, and detailed in later sections.

In particular,
\RAPPOR{} can be used to
collect 
statistics on 
categorical client properties,
by having each bit in a client's response
represent whether, or not, that client belongs to a category.
For example, those categorical predicates 
might represent 
whether, or not, the client is utilizing a software feature.
In this case,
if each client can use only one of three disjoint features, $X, Y$, and $Z$, 
the collection of a three-bit \RAPPOR{} response from clients
will allow measuring
the relative frequency 
by which the features are used by clients.
As regards to privacy,
each client will be protected
by the manner in which 
the three bits are derived from a single (at most) true predicate;
as regards to utility,
it will suffice to count how many responses had the bit set, for each distinct response bit,
to get a good statistical estimate of the empirical distribution of the features' use.

\RAPPOR{}
can also be used to 
gather
population statistics on 
numerical and ordinal values,
e.g., 
by associating response bits
with predicates for different ranges of numerical values,
or by reporting on
disjoint categories for different logarithmic magnitudes of the values.
For such numerical \RAPPOR{} statistics,
the estimate may be improved
by collecting and utilizing
relevant information about the priors and shape of
the empirical distribution, such as its smoothness.

Finally,
\RAPPOR{}
also allows collecting statistics on
non-categorical domains, or categories that cannot be enumerated ahead of time,
through the use of Bloom filters~\cite{bloom}.
In particular, \RAPPOR{}
allows collection of compact Bloom-filter-based randomized responses on strings,
instead of 
having clients report
when they match a set of hand-picked strings, predefined by the operator.
Subsequently,
those responses can be matched against candidate strings, as they become known to the operator,
and used to estimate both known and unknown strings in the population.
Advanced statistical decoding techniques must be applied
to accurately interpret the randomized, noisy data in Bloom-filter-based \RAPPOR{} responses.
However, as in the case of categories,
this analysis needs only consider the aggregate counts of distinct bits set in \RAPPOR{} responses
to provide good estimators for population statistics,
as detailed in Section~\ref{sec:decoding}.

Without loss of privacy,
\RAPPOR{} analysis can be re-run on a collection of responses, e.g., 
to consider new strings and cases missed in previous analyses,
without the need to re-run the data collection step.
Individual responses
can be especially useful for exploratory or custom data analyses. 
For example, if the geolocation of clients' IP addresses are collected
alongside the \RAPPOR{} reports of their sensitive values, 
then the observed distributions
of those values 
could be compared across different geolocations,
e.g., by analyzing different subsets separately.
Such analysis
is compatible with \RAPPOR{}'s privacy guarantees,
which hold true even in the presence of auxiliary data,
such as geolocation.
By limiting the number of correlated categories, or Bloom filter hash functions,
reported by any single client,
\RAPPOR{} can maintain its differential-privacy guarantees even when 
statistics are collected on multiple aspects of clients,
as outlined next, and detailed in Sections~\ref{sec:diffprivacy} and~\ref{sec:attacks}.



\subsection{\RAPPOR{} and (Longitudinal) Attacks}
Protecting privacy for both one-time and multiple collections requires consideration of several distinct attack models.
 A basic attacker is assumed to have access to a single report and can be stopped with a single round of randomized response.
A windowed attacker has access to multiple reports over time from the same user.
Without careful modification of the traditional randomized response techniques, almost certainly full disclosure of private information would happen.
This is especially true if the window of observation is large and the underlying value does not change much.
An attacker with complete access to all clients' reports (for example, an insider with unlimited access rights),
is the hardest to stop, yet such an attack is also the most difficult to execute in practice. \RAPPOR{} provides explicit trade-offs between
different attack models in terms of tunable privacy protection for all three types of attackers.

\RAPPOR{} builds on the basic idea of memoization and provides a framework for one-time and longitudinal privacy protection by playing the randomized response game twice with a memoization step in between. The first step, called a Permanent randomized response, is used to create a ``noisy'' answer which is memoized by the client and permanently reused in place of the real answer. The second step, called an Instantaneous randomized response, reports on the ``noisy'' answer over time, eventually completely revealing it. 
Long-term, longitudinal privacy is ensured by the use of the Permanent randomized response, while the use of an Instantaneous randomized response provides protection against possible tracking externalities.

The idea of \emph{underlying memoization} turns out to be crucial for privacy protection in the case where multiple responses are collected from the same participant over time. For example, in the case of the question about the Communist party from the start of the paper, memoization can allow us to provide $\ln(3)$-differential privacy even with an \emph{infinite} number of responses, as long as the underlying memoized response has that level of differential privacy.

On the other hand,
without memoization or other limitation on responses,
randomization is not sufficient to
maintain plausible deniability
in the face of multiple collections.
For example,
if 75 out of 100 responses are ``Yes'' for a single client
in the randomized-response scheme
at the very start of this paper,
the true answer will have been ``No''
in a vanishingly unlikely $1.39 \times 10^{-24}$
fraction of cases.

Memoization is absolutely effective in providing longitudinal privacy only in cases when the underlying true value does not change
 or changes in an uncorrelated fashion. When users' consecutive reports are temporally correlated, differential privacy guarantees
 deviate from their nominal levels and become progressively weaker as correlations increase. Taken to the extreme, when asking users
 to report daily on their age in days, additional measures are required to prevent full disclosure over time, such as stopping collection after
 a certain number of reports or increasing the noise levels exponentially, 
as discussed further in Section~\ref{sec:attacks}.
 
For a client that reports on a property that strictly alternates between two true values, ($a, b, a, b, a, b, a, b, \ldots$), the two memoized Permanent randomized
 responses for $a$ and $b$ will be reused, again and again, to generate \RAPPOR{} report data. Thus, an attacker that obtains a large enough number of reports,
 could learn those memoized ``noisy" values with arbitrary certainty---e.g.,
 by separately analyzing the even and odd subsequences.
However, even in this case, 
the attacker cannot be certain of the values of $a$ and $b$
because of memoization.
This said,
if $a$ and $b$ are correlated,
the attacker may still learn more
than they otherwise would have;
maintaining privacy in the face of 
any such correlation 
is discussed further in Sections~\ref{sec:diffprivacy} and~\ref{sec:attacks} (see also~\cite{KiferM11}).

\begin{figure*}[!t]
\centering
\includegraphics[trim=0 0.2in 0 0,clip=true,width=2\columnwidth]{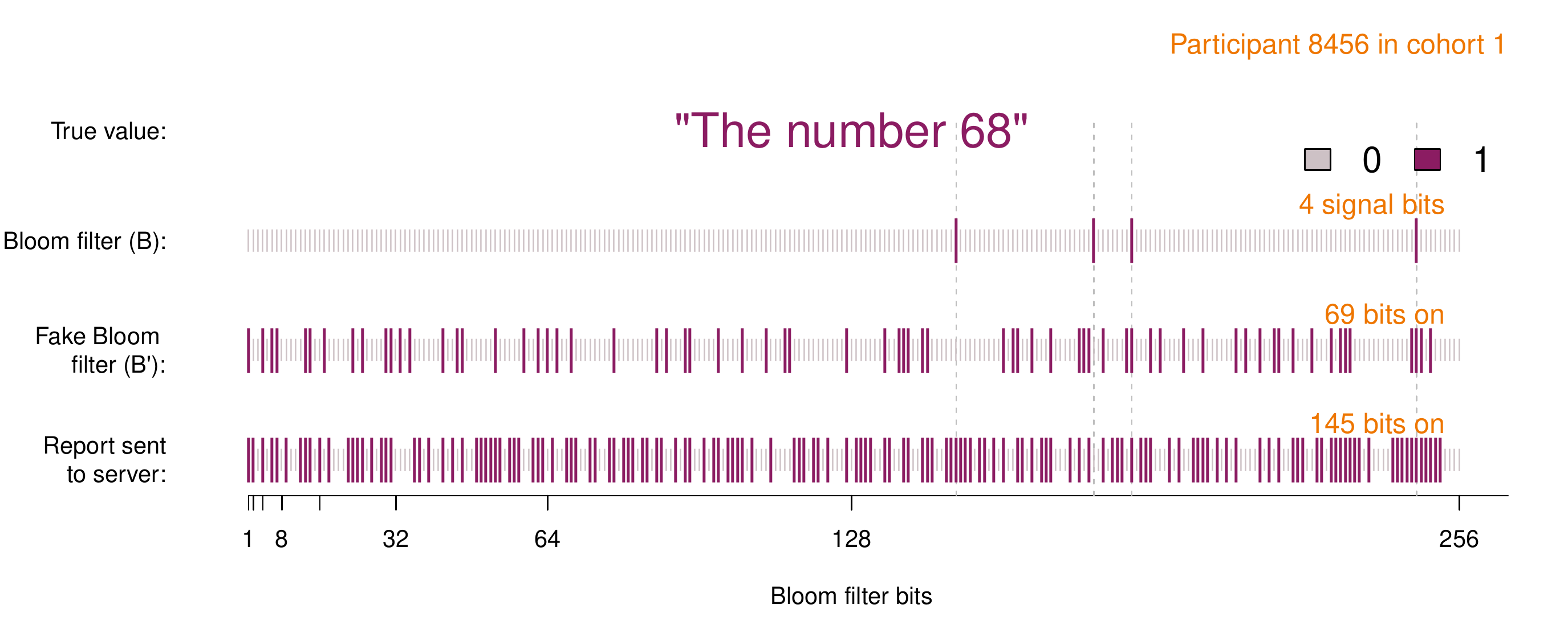}
\caption{Life of a \RAPPOR{} report: The client value of the string ``The number 68'' is hashed onto the Bloom filter $B$ using $h$ (here 4) hash functions. For this string, a Permanent randomized response $B'$ is produces and memoized by the client, and this $B'$ is used (and reused in the future) to generate Instantaneous randomized responses $S$ (the bottom row), which are sent to the collecting service.}\vspace*{-2ex}
\label{fig:life}
\end{figure*}

In the next section we will describe the \RAPPOR{} algorithm in detail. We then provide intuition and formal justification for the reasons why the proposed algorithm satisfies the rigorous privacy guarantees of differential privacy. We then devote several sections to discussion of the additional technical aspects of \RAPPOR{} that are crucial for its potential uses in practice, such as parameter selection, interpretation of results via advanced statistical decoding, and experiments illustrating what can be learned in practice. The remaining sections discuss our experimental evaluation, the attack models we consider, the limitations of the \RAPPOR{} technique, as well as related work.

\section{The Fundamental \RAPPOR{} Algorithm}
Given a client's value $v$, the \RAPPOR{} algorithm executed by the client's machine, reports to the server a bit array of size $k$, that encodes a ``noisy" representation of its true value $v$. The noisy representation of $v$ is chosen in such a way so as to reveal a \emph{controlled} amount of information about $v$, limiting the server's ability to learn with confidence what $v$ was. This remains true even for a client that submits an infinite number of reports on a particular value $v$.

To provide such strong privacy guarantees, the \RAPPOR{} algorithm implements two separate defense mechanisms, both of which are based on the idea of randomized response and can be separately tuned depending on the desired level of privacy protection at each level. 
Furthermore, additional uncertainty is added through the use of Bloom filters which serve not only to make reports compact, but also to complicate the life of any attacker (since any one bit in the Bloom filter may have multiple data items in its pre-image).

The \RAPPOR{} algorithm takes in the client's true value $v$ and parameters of execution $k, h, f, p, q$, and is executed locally on the client's machine performing the following steps:

\begin{enumerate}
\item {\bf Signal.} Hash client's value $v$ onto the Bloom filter $B$ of size $k$ using $h$ hash functions.
\item {\bf Permanent randomized response.} For each client's value $v$ and bit $i, 0 \leq i < k$ in $B$, create a binary reporting value $B'_i$ which equals to
$$
B'_i = \begin{cases}
1, & \text{with probability $\frac{1}{2}f$} \\
0, & \text{with probability $\frac{1}{2}f$} \\
B_i, & \text{with probability $1 - f$}
\end{cases}
$$
where $f$ is a user-tunable parameter controlling the level of longitudinal privacy guarantee.\\[1ex]
Subsequently, this $B'$ is memoized and reused as the basis for all future reports on this distinct value $v$.
\item {\bf Instantaneous randomized response.} Allocate a bit array $S$ of size $k$ and initialize to 0. Set each bit $i$ in $S$ with probabilities
$$
P(S_i = 1) = \begin{cases}
q, & \text{if $B'_i = 1$}. \\
p, & \text{if $B'_i = 0$}.
\end{cases}
$$
\item {\bf Report.} Send the generated report $S$ to the server.
\end{enumerate}

There are many different variants of the above randomized response mechanism. Our main objective for selecting these two particular versions was to make the scheme intuitive and easy to explain. 

The Permanent randomized response (step 2) replaces the real value $B$ with a derived randomized noisy value $B'$. $B'$ may or may not contain any information about $B$ depending on whether signal bits from the Bloom filter are being replaced by random 0's with probability $\frac{1}{2}f$. The Permanent randomized response ensures privacy because of the adversary's limited ability to differentiate between true and ``noisy'' signal bits. It is absolutely critical that all future reporting on the information about $B$ uses the same randomized $B'$ value to avoid an ``averaging" attack, in which an adversary estimates the true value from observing multiple noisy versions of it. 

The Instantaneous randomized response (step 3) plays several important functions. Instead of directly reporting $B'$ on every request, the client reports a randomized version of $B'$. This modification significantly increases the difficulty of tracking a client based on $B'$, which could otherwise be viewed as a unique identifier in longitudinal reporting scenarios. It also provides stronger short-term privacy guarantees (since we are adding more noise to the report) which can be independently tuned to balance short-term vs long-term risks. Through tuning of the parameters of this mechanism we can effectively balance utility against different attacker models.

Figure~\ref{fig:life} shows a random run of the \RAPPOR{} algorithm. Here, a client's value is $v = ``68"$, the size of the Bloom filter is $k = 256$, the number of hash functions is $h = 4$, and the tunable randomized response parameters are: $p = 0.5$, $q = 0.75$, and $f = 0.5$. The reported bit array sent to the server is shown at the bottom of the figure. 145 out of 256 bits are set in the report. Of the four Bloom filter bits in $B$ (second row), two are propagated to the noisy Bloom filter $B'$. Of these two bits, both are turned on in the final report. The other two bits are never reported on by this client due to the permanent nature of $B'$. With multiple collections from this client on the value ``68'', the most powerful attacker would eventually learn $B'$ but would continue to have limited ability to reason about the value of $B$, as measured by differential privacy guarantee. In practice, learning about the actual client's value $v$ is even harder because multiple values map to the same bits in the Bloom filter \citep{bloom_privacy}.

\subsection{\RAPPOR{} Modifications}
The \RAPPOR{} algorithm can be modified in a number of ways depending on the particulars of the scenario in which privacy-preserving data collection is needed. Here, we list three common scenarios where omitting certain elements from the \RAPPOR{} algorithm leads to a more efficient learning procedure, especially with smaller sample sizes. 
\begin{itemize}
\item {\bf One-time \RAPPOR{}.} One time collection, enforced by the client, does not require longitudinal privacy protection. The Instantaneous randomized response step can be skipped in this case and a direct randomization on the true client's value is sufficient to provide strong privacy protection. 
\item {\bf Basic \RAPPOR{}.} If the set of strings being collected is relatively small and well-defined, such that each string can be deterministically mapped to a single bit in the bit array, there is no need for using a Bloom filter with multiple hash functions. For example, collecting data on client's gender could simply use a two-bit array with ``male'' mapped to bit 1 and ``female'' mapped to bit 2. This modification would affect step 1, where a Bloom filter would be replaced by a deterministic mapping of each candidate string to one and only one bit in the bit array. In this case, the effective number of hash functions, $h$, would be 1.
\item {\bf Basic One-time \RAPPOR{}.} This is the simplest configuration of the \RAPPOR{} mechanism, combining the first two modifications at the same time: one round of randomization using a deterministic mapping of strings into their own unique bits.
\end{itemize}

\section{Differential Privacy of \RAPPOR{}}\label{sec:diffprivacy}
The scale and availability of data in today's world makes increasingly sophisticated attacks feasible, and any system that hopes to withstand such attacks should aim to ensure rigorous, rather than merely intuitive privacy guarantees. For our analysis, we adopt the rigorous notion of privacy, \textit{differential privacy}, which was introduced by Dwork et al~\citep{dwork06} and has been widely adopted~\cite{DworkCACM}. The definition aims to ensure that the output of the algorithm does not significantly depend on any particular individual's data. The quantification of the increased risk that participation in a service poses to an individual can, therefore, empower clients to make a better informed decision as to whether they want their data to be part of the collection.

Formally, a randomized algorithm $A$ satisfies $\epsilon$-differential privacy~\citep{dwork06} if for all pairs of client's values $v_1$ and $v_2$ and for all $R\subseteq Range(A)$,
$$
P(A(v_1) \in R) \le e^\epsilon P(A(v_2) \in R).
$$

We prove that the \RAPPOR{} algorithm satisfies the definition of differential privacy next. 
Intuitively, the Permanent randomized response part ensures that the ``noisy" value derived from the true value protects privacy, and the Instantaneous randomized response provides protection against usage of that response by a longitudinal tracker.




\subsection{Differential Privacy of the Permanent Randomized Response}
\begin{theorem}\label{thm-one}
The Permanent randomized response (Steps 1 and 2 of \RAPPOR{}) satisfies $\epsilon_{\infty}$-differential privacy where $\epsilon_{\infty} = 2h\ln\left(\frac{1 - \frac{1}{2}f}{\frac{1}{2}f}\right)$.
\end{theorem}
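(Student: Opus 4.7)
The plan is to bound the ratio $P(B' = b \mid v_1) / P(B' = b \mid v_2)$ uniformly over all possible outputs $b$ and pairs of input values, and then extend from single outputs to arbitrary subsets $R$ by the usual summation argument.

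First I would fix two client values $v_1, v_2$, let $B^{(1)}, B^{(2)} \in \{0,1\}^k$ be their Bloom-filter encodings, and recall that each encoding has at most $h$ bits equal to $1$ (exactly $h$ if there are no hash collisions; the bound I establish will only be better if there are). The key structural observation is that, conditioned on $B^{(j)}$, the bits of $B'$ are independent across indices $i$, because the randomization in Step 2 is applied coordinate-wise. Therefore
\begin{equation*}
\frac{P(B' = b \mid v_1)}{P(B' = b \mid v_2)} = \prod_{i=0}^{k-1} \frac{P(B'_i = b_i \mid B^{(1)}_i)}{P(B'_i = b_i \mid B^{(2)}_i)}.
\end{equation*}

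Next I would do the per-coordinate case analysis. From Step 2, $P(B'_i = 1 \mid B_i = 1) = 1 - \tfrac{1}{2}f$, $P(B'_i = 0 \mid B_i = 1) = \tfrac{1}{2}f$, and symmetrically for $B_i = 0$. Thus at any index $i$ where $B^{(1)}_i = B^{(2)}_i$, the corresponding factor in the product is exactly $1$. At any index $i$ where $B^{(1)}_i \neq B^{(2)}_i$, the factor is either $(1-\tfrac{1}{2}f)/(\tfrac{1}{2}f)$ or its reciprocal, depending on $b_i$; in either case it is at most $(1-\tfrac{1}{2}f)/(\tfrac{1}{2}f)$.

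Then I would count the coordinates that can actually contribute. The disagreement set $\{i : B^{(1)}_i \neq B^{(2)}_i\}$ is the symmetric difference of the supports of $B^{(1)}$ and $B^{(2)}$, and since each support has size at most $h$, this symmetric difference has size at most $2h$. Multiplying the per-coordinate bounds yields
\begin{equation*}
\frac{P(B' = b \mid v_1)}{P(B' = b \mid v_2)} \;\le\; \left(\frac{1 - \tfrac{1}{2}f}{\tfrac{1}{2}f}\right)^{2h},
\end{equation*}
and taking logs gives exactly the stated $\epsilon_\infty$. Finally, summing over $b \in R$ extends the pointwise bound to the set formulation of differential privacy.

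The main obstacle I anticipate is merely being careful about the symmetric-difference counting: one must notice that the worst case is attained when the two Bloom filters have \emph{disjoint} supports of full size $h$, so that all $2h$ disagreeing bits can simultaneously pull the ratio in the same direction. The per-coordinate computation and the independence step are routine; the extension from single outputs to measurable sets $R$ is immediate and requires no new idea.
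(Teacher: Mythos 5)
Your proposal is correct and follows essentially the same route as the paper's own proof: both reduce the set-valued ratio to a pointwise ratio of output probabilities (the paper via its Observation~\ref{obs-ratios}, you via the standard summation argument), factor that ratio over coordinates using the independence of the per-bit randomization, and identify the worst case as two Bloom filters with disjoint supports, so that all $2h$ disagreeing coordinates each contribute a factor of $\bigl(1-\tfrac{1}{2}f\bigr)/\bigl(\tfrac{1}{2}f\bigr)$. Your explicit symmetric-difference counting and your remark that hash collisions only improve the bound are slightly cleaner in presentation, but they do not constitute a different argument.
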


\begin{proof}
Let $S = s_1, \ldots, s_k$ be a randomized report generated by the \RAPPOR{} algorithm. Then the probability of observing any given report $S$ given the true client value $v$ and assuming that $B'$ is known is
\begin{eqnarray*}
P(S = s | V = v) & = & P(S = s | B, B', v)\cdot P(B' | B, v)\cdot P(B | v) \\
                         & = & P(S = s | B')\cdot P(B' | B)\cdot P(B | v) \\
                         & = & P(S = s | B')\cdot P(B' | B).
\end{eqnarray*}

Because $S$ is conditionally independent of $B$ given $B'$, the first probability provides no additional information about $B$. 
$P(B' | B)$ is, however, critical for longitudinal privacy protection. Relevant probabilities are
\begin{eqnarray*}
P(b'_i = 1 | b_i = 1) & = & \frac{1}{2}f + 1 - f = 1 - \frac{1}{2}f \;\;\;\text{~and~}\\
P(b'_i = 1 | b_i = 0) & = & \frac{1}{2}f.
\end{eqnarray*}

Without loss of generality, let the Bloom filter bits $1, \ldots, h$ be set, i.e., $b^* = \{b_1 = 1, \ldots, b_h = 1, b_{h+1} = 0, \ldots, b_{k} = 0\}$. Then,
\begin{eqnarray*}
P(B' = b' | B = b^*) & = & \left(\frac{1}{2}f\right)^{b'_1}\left(1 - \frac{1}{2}f\right)^{1 - b'_1} \times \ldots \\
                   &   & \times \left(\frac{1}{2}f\right)^{b'_h}\left(1 - \frac{1}{2}f\right)^{1 - b'_h} \times \ldots \\
                   &   & \times \left(1 - \frac{1}{2}f\right)^{b'_{h + 1}}\left(\frac{1}{2}f\right)^{1 - b'_{h + 1}} \times \ldots \\
                   &   & \times \left(1 - \frac{1}{2}f\right)^{b'_k}\left(\frac{1}{2}f\right)^{1 - b'_k}.
\end{eqnarray*}

Let $RR_{\infty}$ be the ratio of two such conditional probabilities with distinct values of $B$, $B_1$ and $B_2$, i.e., $RR_{\infty} = \frac{P(B'\in R^* | B = B_1)}{P(B' \in R^* | B = B_2)}$. For the differential privacy condition to hold, $RR_{\infty}$ needs to be bounded by $\exp(\epsilon_{\infty})$.

\begin{eqnarray*}
RR_{\infty} & = &  \frac{P(B'\in R^* | B = B_1)}{P(B' \in R^* | B = B_2)} \\
                  & =  & \frac{\sum_{B'_i \in R^*}P(B' = B'_i |B = B_1)}{\sum_{B'_i \in R^*} P(B' = B'_i | B = B_2)} \\
                  & \le & \max_{B'_i \in R^*} \frac{P(B' = B'_i | B = B_1)}{P(B' = B'_i| B = B_2)}  \;\;\;\;\; \text{(by Observation~\ref{obs-ratios})} \\
                 & =  & \left(\frac{1}{2}f\right)^{2(b'_1 + b'_2 + \ldots + b'_h - b'_{h+1} - b'_{h+2} - \ldots - b'_{2h})} \\
                  & & \times \left(1 - \frac{1}{2}f\right)^{2(b'_{h+1} + b'_{h+2} + \ldots + b'_{2h} - b'_1 - b'_2 - \ldots - b'_h)}.
\end{eqnarray*}

Sensitivity is maximized when $b'_{h+1} = b'_{h+2} = \ldots = b'_{2h} = 1$ and $b'_1 = b'_2 = \ldots = b'_h = 0$. Then,\\
$
RR_{\infty} = \left(\frac{1 - \frac{1}{2}f}{\frac{1}{2}f}\right)^{2h} \text{~and~}
\epsilon_{\infty} = 2h\ln\left(\frac{1 - \frac{1}{2}f}{\frac{1}{2}f}\right).
$~\end{proof}

Note that $\epsilon_{\infty}$ is not a function of $k$. It is true that a smaller $k$, or a higher rate of Bloom filter bit collision, sometimes improves privacy protection, but, on its own, it is not sufficient nor necessary to provide $\epsilon$-differential privacy.

\subsection{Differential Privacy of the Instantaneous Randomized Response}
With a single data collection from each client, the attacker's knowledge of $B$ must come directly from a single report $S$ generated by applying the randomization twice, thus, providing a higher level of privacy protection than under the assumption of complete knowledge of $B'$.

Because of a two-step randomization, probability of observing a 1 in a report is a function of both $q$ and $p$ as well as $f$.
\begin{lemma}\label{lem-one}
Probability of observing 1 given that the underlying Bloom filter bit was set is given by
$$
q^* = P(S_i = 1 | b_i = 1) = \frac{1}{2}f(p + q) + (1 - f)q.
$$
Probability of observing 1 given that the underlying Bloom filter bit was \emph{not} set is given by
$$
p^* = P(S_i = 1 | b_i = 0) = \frac{1}{2}f(p + q) + (1 - f)p.
$$
\end{lemma}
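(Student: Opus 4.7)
The plan is to obtain both identities by a single application of the law of total probability, conditioning on the value of the memoized bit $b'_i$, since $S_i$ depends on $b_i$ only through $b'_i$. Once we partition on $b'_i$, the conditional probabilities $P(S_i = 1 \mid b'_i = 1) = q$ and $P(S_i = 1 \mid b'_i = 0) = p$ come directly from Step~3 of the algorithm, and the probabilities $P(b'_i \mid b_i)$ come directly from Step~2.

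First I would record the four transition probabilities from Step~2: namely $P(b'_i = 1 \mid b_i = 1) = 1 - \tfrac{1}{2}f$ (the $1 - f$ ``truth'' case plus the $\tfrac{1}{2}f$ ``forced 1'' case) and $P(b'_i = 0 \mid b_i = 1) = \tfrac{1}{2}f$, and symmetrically $P(b'_i = 1 \mid b_i = 0) = \tfrac{1}{2}f$ and $P(b'_i = 0 \mid b_i = 0) = 1 - \tfrac{1}{2}f$. These are already laid out in the proof of Theorem~\ref{thm-one}, so I can simply cite them.

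Then for the first identity, I would expand
\[
P(S_i = 1 \mid b_i = 1) = \sum_{c \in \{0,1\}} P(S_i = 1 \mid b'_i = c)\,P(b'_i = c \mid b_i = 1)
= q\bigl(1 - \tfrac{1}{2}f\bigr) + p\cdot\tfrac{1}{2}f,
\]
and rearrange the right-hand side as $\tfrac{1}{2}f(p+q) + (1-f)q$. The second identity follows by the identical computation with the roles of $p$ and $q$ swapped, giving $q\cdot\tfrac{1}{2}f + p\bigl(1 - \tfrac{1}{2}f\bigr) = \tfrac{1}{2}f(p+q) + (1-f)p$.

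There is really no obstacle here: the only thing to be careful about is the justification that $S_i$ is conditionally independent of $b_i$ given $b'_i$, which follows because Step~3 samples $S_i$ using only $b'_i$. Everything else is one line of algebra in each case.
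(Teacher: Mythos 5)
Your proof is correct, and it is exactly the computation the paper has in mind: the authors omit the proof with the remark that the probabilities are ``mixtures of random and true responses with the mixing proportion $f$,'' which is precisely your total-probability decomposition over $b'_i$ using the Step~2 transition probabilities and the Step~3 emission probabilities $q$ and $p$. The algebra checks out in both cases, so nothing further is needed.
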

We omit the proof as the reasoning is straightforward that probabilities in both cases are mixtures of random and true responses with the mixing proportion $f$.

\begin{theorem}
The Instantaneous randomized response (Step 3 of \RAPPOR{}) satisfies $\epsilon_1$-differential privacy, where $\epsilon_1 = h\log\left(\frac{q^*(1 - p^*)}{p^*(1 - q^*)}\right)$ and $q^*$ and $p^*$ as defined in Lemma~\ref{lem-one}.
\end{theorem}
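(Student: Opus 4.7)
The plan is to mirror the structure of the proof of Theorem~\ref{thm-one}, but to view the composition of the Permanent and Instantaneous steps as a single randomized response whose bit-wise ``success'' probabilities are $q^*$ and $p^*$ supplied by Lemma~\ref{lem-one}. Concretely, I would first argue that, conditional on the true client value $v$ (hence on the deterministic Bloom filter $B$), the report bits $S_1,\ldots,S_k$ are mutually independent, with $S_i \sim \mathrm{Bern}(q^*)$ when $b_i=1$ and $S_i \sim \mathrm{Bern}(p^*)$ when $b_i=0$. Independence follows because $B\to B'$ is applied coordinate-wise with independent noise, and $B'\to S$ is applied coordinate-wise with independent noise, so after marginalizing out $B'$ one bit at a time the joint distribution of $S$ factorizes. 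Lemma~\ref{lem-one} then supplies the marginals $q^*$ and $p^*$.

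Next, for any two client values $v_1, v_2$ with Bloom filters $B_1, B_2$ and any $R^* \subseteq \{0,1\}^k$, I would bound
\begin{equation*}
\frac{P(S\in R^* \mid V=v_1)}{P(S\in R^* \mid V=v_2)}
\;\le\; \max_{s\in R^*} \prod_{i=1}^k \frac{P(S_i=s_i\mid b_{1,i})}{P(S_i=s_i\mid b_{2,i})},
\end{equation*}
using exactly the same ratio-of-sums trick (Observation~\ref{obs-ratios}) invoked in Theorem~\ref{thm-one}. Indices $i$ where $b_{1,i}=b_{2,i}$ contribute a factor of $1$. Indices where $(b_{1,i},b_{2,i})=(1,0)$ contribute $q^*/p^*$ if $s_i=1$ and $(1-q^*)/(1-p^*)$ if $s_i=0$; indices where $(b_{1,i},b_{2,i})=(0,1)$ are symmetric. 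Since $q^*>p^*$, the per-coordinate maximum is $q^*/p^*$ in the former case and $(1-p^*)/(1-q^*)$ in the latter.

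Finally, I would observe that each Bloom filter has at most $h$ bits set, so the worst case arises when the $h$ set bits of $B_1$ and the $h$ set bits of $B_2$ are entirely disjoint. This yields exactly $h$ positions contributing $q^*/p^*$ and $h$ positions contributing $(1-p^*)/(1-q^*)$, giving
\begin{equation*}
\frac{P(S\in R^* \mid V=v_1)}{P(S\in R^* \mid V=v_2)} \;\le\; \left(\frac{q^*(1-p^*)}{p^*(1-q^*)}\right)^{h},
\end{equation*}
and taking logs delivers the claimed $\epsilon_1$.

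The main obstacle, such as it is, is the bookkeeping at the step where the two-stage randomization is collapsed into a single-stage Bernoulli product: one must be careful that the conditional distribution of $S$ given $B$ really is a product of independent Bernoullis after $B'$ is integrated out, rather than attempting to track $B'$ explicitly as in Theorem~\ref{thm-one}. Once that reduction is in hand, the rest is identical to the earlier worst-case coordinate analysis, and the $h$-disjoint-bits sensitivity argument carries over verbatim.
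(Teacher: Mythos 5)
Your proposal is correct and follows essentially the same route as the paper's own proof: bound the ratio of sums by the maximum per-report ratio via Observation~\ref{obs-ratios}, then evaluate the worst case where the two Bloom filters' $h$ set bits are disjoint, yielding $\left(\frac{q^*(1-p^*)}{p^*(1-q^*)}\right)^h$. The only difference is that you make explicit the intermediate step the paper leaves implicit---that marginalizing out $B'$ coordinate-wise collapses the two-stage randomization into a product of independent Bernoullis with parameters $q^*$ and $p^*$---which is a welcome clarification rather than a departure.
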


\begin{proof}
The proof is analogous to Theorem~\ref{thm-one}. Let $RR_1$ be the ratio of two conditional probabilities, i.e., $RR_1 = \frac{P(S \in R | B = B_1)}{P(S \in R | B = B_2)}$. To satisfy the differential privacy condition, this ratio must be bounded by $\exp(\epsilon_1)$.
\begin{eqnarray*}
RR_1 &=& \frac{P(S \in R | B = B_1)}{P(S \in R | B = B_2)} \\
   &=& \frac{\sum_{s_j \in R}P(S = s_j | B = B_1)}{\sum_{s_j \in R}P(S = s_j | B = B_2)} \\
   &\le& \max_{s_j \in R} \frac{P(S= s_j | B = B_1)}{P(S = s_j | B = B_2)} \\
   &=& \left[\frac{q^*(1 - p^*)}{p^*(1 - q^*)}\right]^h
\end{eqnarray*}
and
$$
\epsilon_1 = h\log\left(\frac{q^*(1 - p^*)}{p^*(1 - q^*)}\right).
$$\end{proof}

The above proof naturally extends to $N$ reports, since each report that is not changed contributes a fixed amount to the total probability of observing \emph{all} reports and enters both nominator and denominator in a multiplicative way (because of independence). Since our differential privacy framework considers inputs that differ only in a single record, $j$, (reports set $D_1$ becomes $D_2$ differing in a single report $S_j$), the rest of the product terms end up canceling out in the ratio
\begin{eqnarray*}
\frac{P(S_1 = s_1, S_2 = s_2, \ldots, S_j = s_j, \ldots, S_N = s_N | B_1)}{P(S_1 = s_1, S_2 = s_2, \ldots, S_j = s_j, \ldots, S_N = s_N | B_2)}  =\\
   \frac{\prod_{i=1}^N P(S_i = s_i | B_1)}{\prod_{i=1}^N P(S_i = s_i | B_2)} = \frac{P(S_j = s_j | B_1)}{P(S_j = s_j | B_2)}.
\end{eqnarray*}

Computing $\epsilon_n$ for the $n$th collection cannot be made without additional assumptions about how effectively the attacker can learn $B'$ from the collected reports. We continue working on providing these bounds under various learning strategies. Nevertheless, as $N$ becomes large, the bound approaches $\epsilon_{\infty}$ but always remains strictly smaller.

\begin{figure*}[!t]
\centering
\includegraphics[width=2\columnwidth]{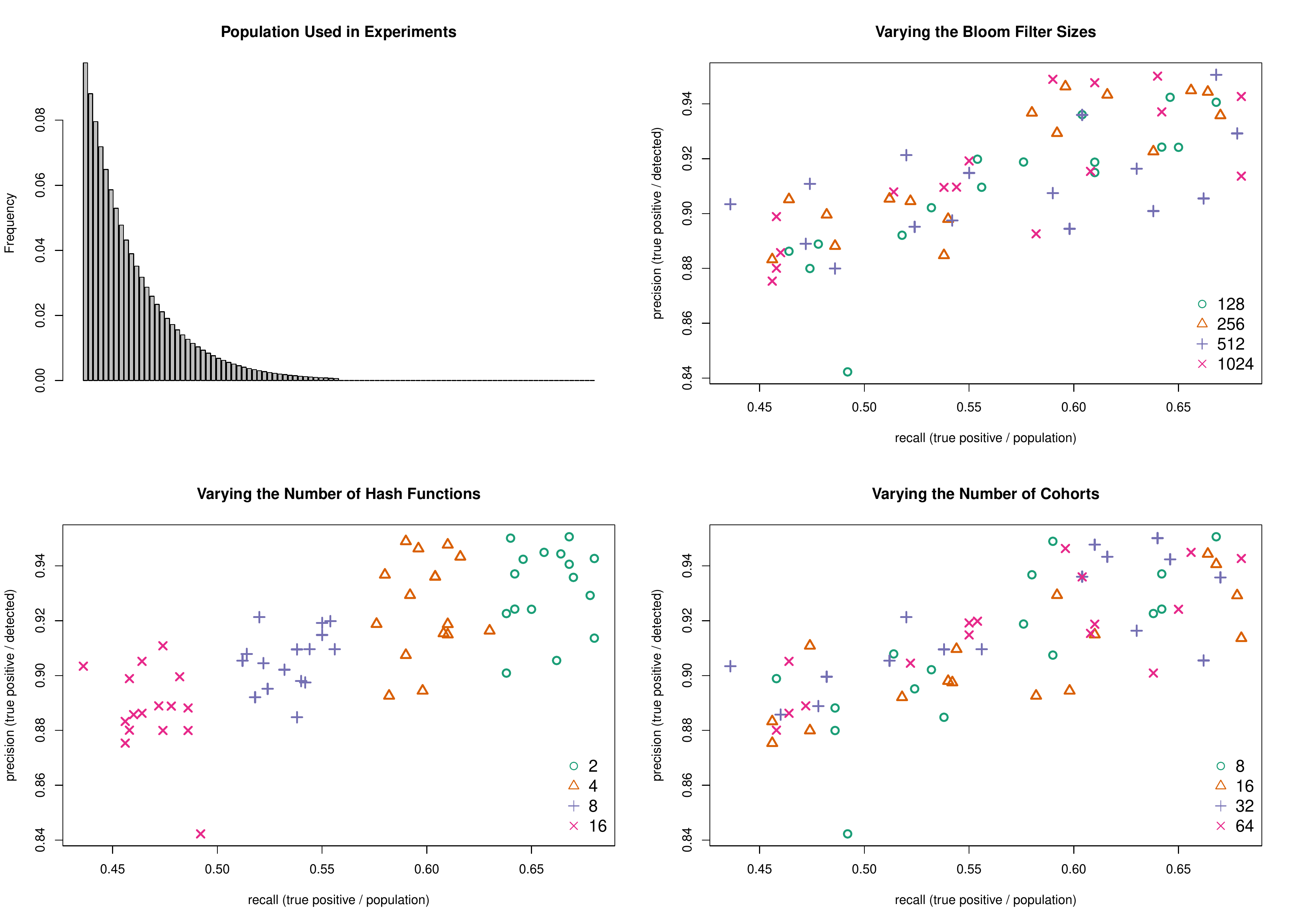}
\caption{Recall versus precision depending on choice of parameters $k$, $h$, and $m$. The first panel shows the true population distribution from which \RAPPOR{} reports were sampled. The other three panels vary one of the parameters while keeping the other two fixed. Best precision and recall are achieved with using 2 hash functions, while the choices of $k$ and $m$ do not show clear preferences.}
\label{fig:sim}
\end{figure*}

\section{High-utility Decoding of Reports}\label{sec:decoding}
In most cases, the goal of data collection using \RAPPOR{} is to learn which strings are present in the sampled population and what their corresponding frequencies are. Because we make use of the Bloom filter (loss of information) and purposefully add noise for privacy protection, decoding requires sophisticated statistical techniques.

To facilitate learning, before any data collection begins each client is randomly assigned and becomes a permanent member of one of $m$ \emph{cohorts}. Cohorts implement different sets of $h$ hash functions for their Bloom filters, thereby reducing the chance of accidental collisions of two strings across all of them. Redundancy introduced by running $m$ cohorts simultaneously greatly improves the false positive rate. The choice of $m$ should be considered carefully, however. When $m$ is too small, then collisions are still quite likely, while when $m$ is too large, then each individual cohort provides insufficient signal due to its small sample size (approximately $N/m$, where $N$ is the number of reports). Each client must report its cohort number with every submitted report, i.e., it is not private but made private.

We propose the following approach to learning from the collected reports:
\begin{itemize}
\item Estimate the number of times each bit $i$ within cohort $j$, $t_{ij}$, is truly set in $B$ for each cohort. Given the number of times each bit $i$ in cohort $j$, $c_{ij}$ was set in a set of $N_j$ reports, the estimate is given by
$$
t_{ij} = \frac{c_{ij} - (p + \frac{1}{2}fq - \frac{1}{2}fp)N_j}{(1 - f)(q - p)}.
$$
Let $Y$ be a vector of $t_{ij}$'s, $i \in [1, k], j \in[1, m]$.
\item Create a design matrix $X$ of size $km \times M$ where $M$ is the number of candidate strings under consideration. $X$ is mostly 0 (sparse) with 1's at the Bloom filter bits for each string for each cohort. So each column of $X$ contains $hm$ 1's at positions where a particular candidate string was mapped to by the Bloom filters in all $m$ cohorts. Use Lasso \citep{lasso} regression to fit a model $Y \sim X$ and select candidate strings corresponding to non-zero coefficients.
\item Fit a regular least-squares regression using the selected variables to estimate counts, their standard errors and p-values.
\item Compare p-values to a Bonferroni corrected level of $\alpha / M = 0.05 / M$ to determine which frequencies are statistically significant from 0. Alternatively, controlling the False Discovery Rate (FDR) at level $\alpha$ using the Benjamini-Hochberg procedure \cite{Benjamini1995}, for example, could be used.
\end{itemize}

\subsection{Parameter Selection}
Practical implementation of the \RAPPOR{} algorithm requires specification of a number of parameters. $p$, $q$, $f$ and the number of hash functions $h$ control the level of privacy for both one-time and longitudinal collections. Clearly, if no longitudinal data is being collected, then we can use One-time \RAPPOR{} modification. With the exception of $h$, the choice of values for these parameters should be driven exclusively by the desired level of privacy $\epsilon$. $\epsilon$ itself can be picked depending on the circumstances of the data collection process; values in the literature range from $0.01$ to $10$ (see Table 1 in \cite{HsuGHKNPR14}).

Bloom filter size, $k$, the number of cohorts, $m$, and $h$ must also be specified \emph{a priori}. Besides $h$, neither $k$ nor $m$ are related to the worst-case privacy considerations and should be selected based on the efficiency properties of the algorithm in reconstructing the signal from the noisy reports.

We ran a number of simulations (averaged over 10 replicates) to understand how these three parameters effect decoding; see Figure \ref{fig:sim}. All scenarios assumed $\epsilon = \ln(3)$ privacy guarantee. Since only a single report from each user was simulated, One-time \RAPPOR{} was used. Population sampled is shown in the first panel and contains 100 non-zero strings with 100 strings that had zero probability of occurring. Frequencies of non-zero strings followed an exponential distribution as shown in the figure.

\begin{figure}[t]
\centering
\includegraphics[width=\columnwidth]{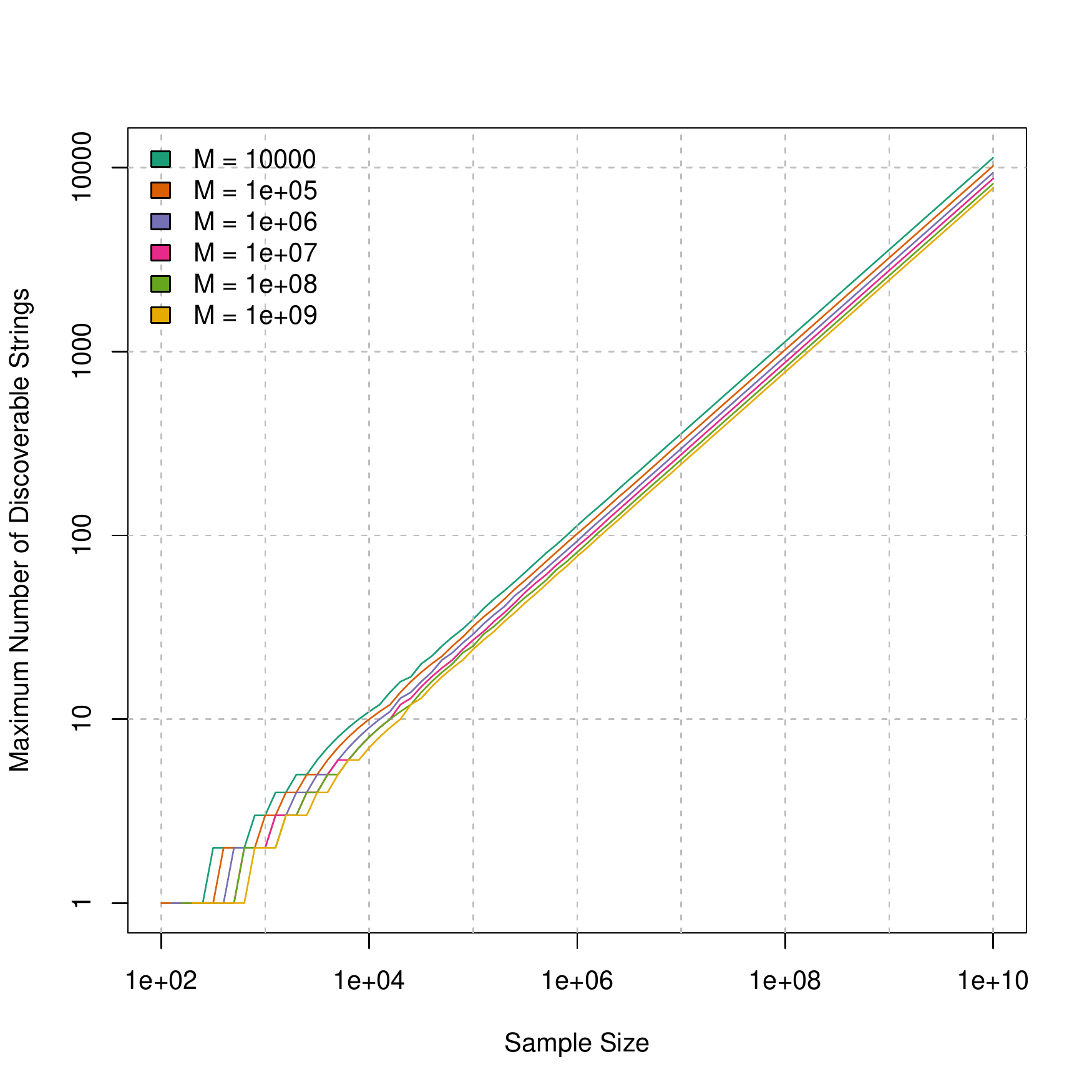}
\caption{Sample size vs the upper limit on the strings whose frequency can be learned. Seven colored lines represent different cardinalities of the candidate string set. Here, $p = 0.5$, $q = 0.75$ and $f = 0$.}
\label{fig:learn}
\end{figure}

In the other three panels, the x-axis shows the recall rate and the y-axis shows the precision rate. In all three panels, the same set of points are plotted and are only labeled differently depending on which parameter changes in a particular panel. Each point represents an average recall and precision for a unique combination of $k$, $h$, and $m$. For example, the second panel shows the effect of the Bloom filter size on both precision and recall while keeping both $h$ and $m$ fixed. It is difficult to make definitive conclusions about the optimal size of the Bloom filter as different sizes perform similarly depending on the values of $h$ and $m$. The third panel, however, shows a clear preference for using only two hash functions from the perspective of utility, as the decrease in the number of hash functions used increases the expected recall. The fourth panel, similarly to the second, does not definitively indicate the optimal direction for choosing the number of cohorts.

\subsection{What Can We Learn?}
In practice, it is common to use thresholds on the number of unique submissions in order to ensure some privacy. However, arguments as to how those thresholds should be set abound, and most
of the time they are based on a `feel' for what is accepted and lack any objective justification. \RAPPOR{} also requires $\epsilon$, a user-tunable parameter, which by the design of the algorithm translates into limits on frequency domain, i.e., puts a lower limit on the number of times a string needs to be observed in a sample before it can be reliably identified and its frequency estimated. Figure~\ref{fig:learn} shows the relationship between the sample size (x-axis) and the theoretical upper limit (y-axis) on how many strings can be detected at that sample size for a particular choice of $p = 0.5$ and $q = 0.75$ (with $f = 0$) at a given confidence level $\alpha = 0.05$.

It is perhaps surprising that we do not learn more at very large sample sizes (e.g., one billion). The main reason is that as the number of strings in the population becomes large, their frequencies proportionally decrease and they become hard to detect at those low frequencies.

We can only reliably detect about 10,000 strings in a sample of ten billion and about 1,000 with a sample of one hundred million. A general rule of thumb is $\sqrt{N}/10$, where $N$ is the sample size. These theoretical calculations are based on the Basic One-time \RAPPOR{} algorithm (the third modification) and are the upper limit on what can be learned since there is no additional uncertainty introduced by the use of Bloom filter. Details of the calculations are shown in the Appendix.

While providing $\ln(3)$-differential privacy for one time collection, if one would like to detect items with frequency 1\%, then one million samples are required,
0.1\% would require a sample size of 100 million and 0.01\% items would be identified only in a sample size of 10 billion. 

Efficiency of the unmodified \RAPPOR{} algorithm is significantly inferior when compared to the Basic One-time \RAPPOR{} (the price of compression). Even for the Basic One-time \RAPPOR{}, the provided bound can be theoretically achieved only if the underlying distribution of the strings' frequencies is uniform (a condition under which the smallest frequency is maximized). With the presence of several high-frequency strings, there is less probability mass left for the tail and, with the drop in their frequencies, their detectability suffers.

\section{Experiments and Evaluation}
We demonstrate our approach using two simulated and two real-world collection examples. The first simulated one uses the Basic One-time \RAPPOR{} where we learn the shape of the underlying Normal distribution. The second simulated example uses unmodified \RAPPOR{} to collect strings whose frequencies exhibit exponential decay. The third example is drawn from a real-world dataset on processes running on a set of Windows machines. The last example is based on the Chrome browser settings collections.

\subsection{Reporting on the Normal Distribution}
To get a sense of how effectively we can learn the underlying distribution of values reported through the Basic One-time \RAPPOR{}, we simulated learning the shape of the Normal distribution (rounded to integers) with mean 50 and standard deviation 10. The privacy constraints were: $q = 0.75$ and $p = 0.5$ providing $\epsilon = \ln(3)$ differential privacy ($f = 0$). Results are shown in Figure~\ref{fig:normal} for three different sample sizes. With 10,000 reports, results are just too noisy to obtain a good estimate of the shape. The Normal bell curve begins to emerge already with 100,000 reports and at one million reports it is traced very closely. Notice the noise in the left and right tails where there is essentially no signal. It is required by the differential privacy condition and also gives a sense of how uncertain our estimated counts are.

\begin{figure*}[!t]
\centering
\includegraphics[scale=.5]{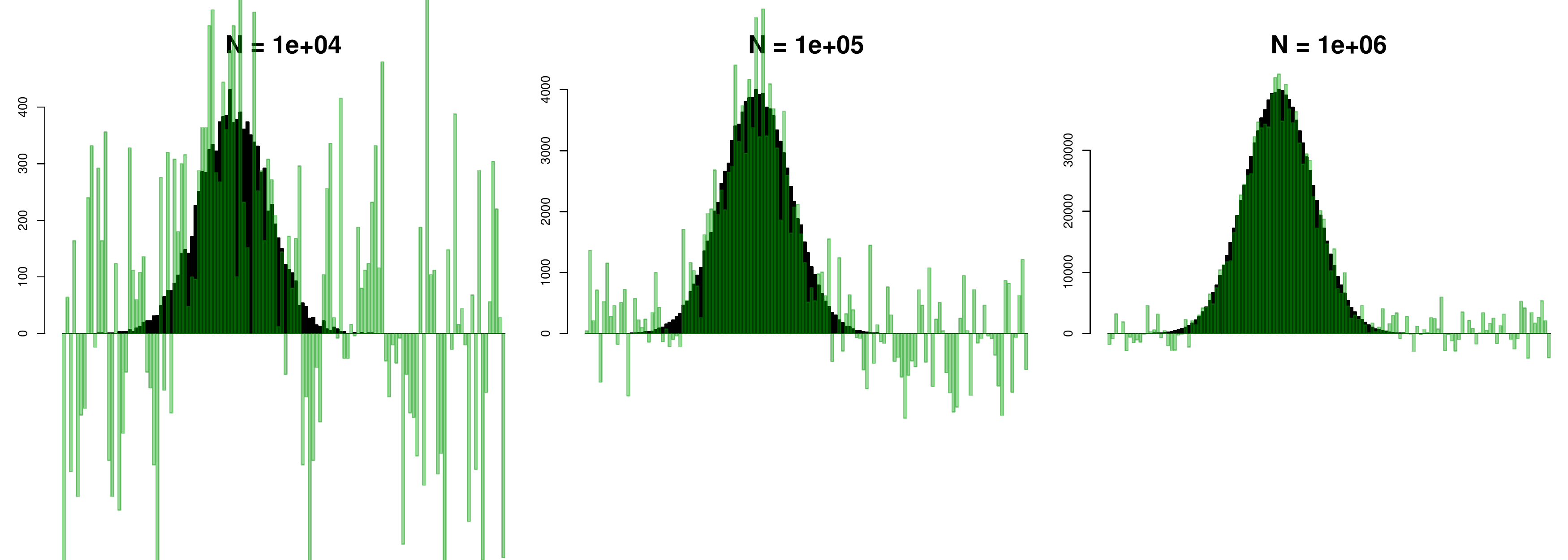}
\caption{Simulations of learning the normal distribution with mean 50 and standard deviation 10. The \RAPPOR{} privacy parameters are $q = 0.75$ and $p = 0.5$, corresponding to $\epsilon = \ln(3)$. True sample distribution is shown in black; light green shows the estimated distribution based on the decoded \RAPPOR{} reports. We do not assume \emph{a priori} knowledge of the Normal distribution in learning. If such prior information were available, we could significantly improve upon learning the shape of the distribution via smoothing.}
\label{fig:normal}
\end{figure*}

\subsection{Reporting on an Exponentially-distributed Set of Strings} 
The true underlying distribution of strings from which we sample is shown in Figure~\ref{fig:detected}. It shows commonly encountered exponential decay in the frequency of strings with several ``heavy hitters'' and the long tail. After sampling 1 million values (one collection event per user) from this population at random, we apply \RAPPOR{} to generate 1 million reports with $p = 0.5$, $q = 0.75$, $f = 0.5$, two hash functions, Bloom filter size of 128 bits and 16 cohorts.

After the statistical analysis using the Bonferroni correction discussed above, 47 strings were estimated to have counts significantly different from 0. Just 2 of the 47 strings were false positives, meaning their true counts were truly 0 but estimated to be significantly different. The top-20 detected strings with their count estimates, standard errors, p-values and z-scores (SNR) are shown in Table~\ref{tab:results20}. Small p-values show high confidence in our assessment that the true counts are much larger than 0 and, in fact, comparing columns 2 and 5 confirms that. Figure~\ref{fig:detected} shows all 47 detected strings in dark red. All common strings above the frequency of approximately $1\%$ were detected and the long tail remained protected by the privacy mechanism.

\begin{figure}[!t]
\centering
\includegraphics[trim=0 0.8in 0 0.6in,clip=true,width=\columnwidth]{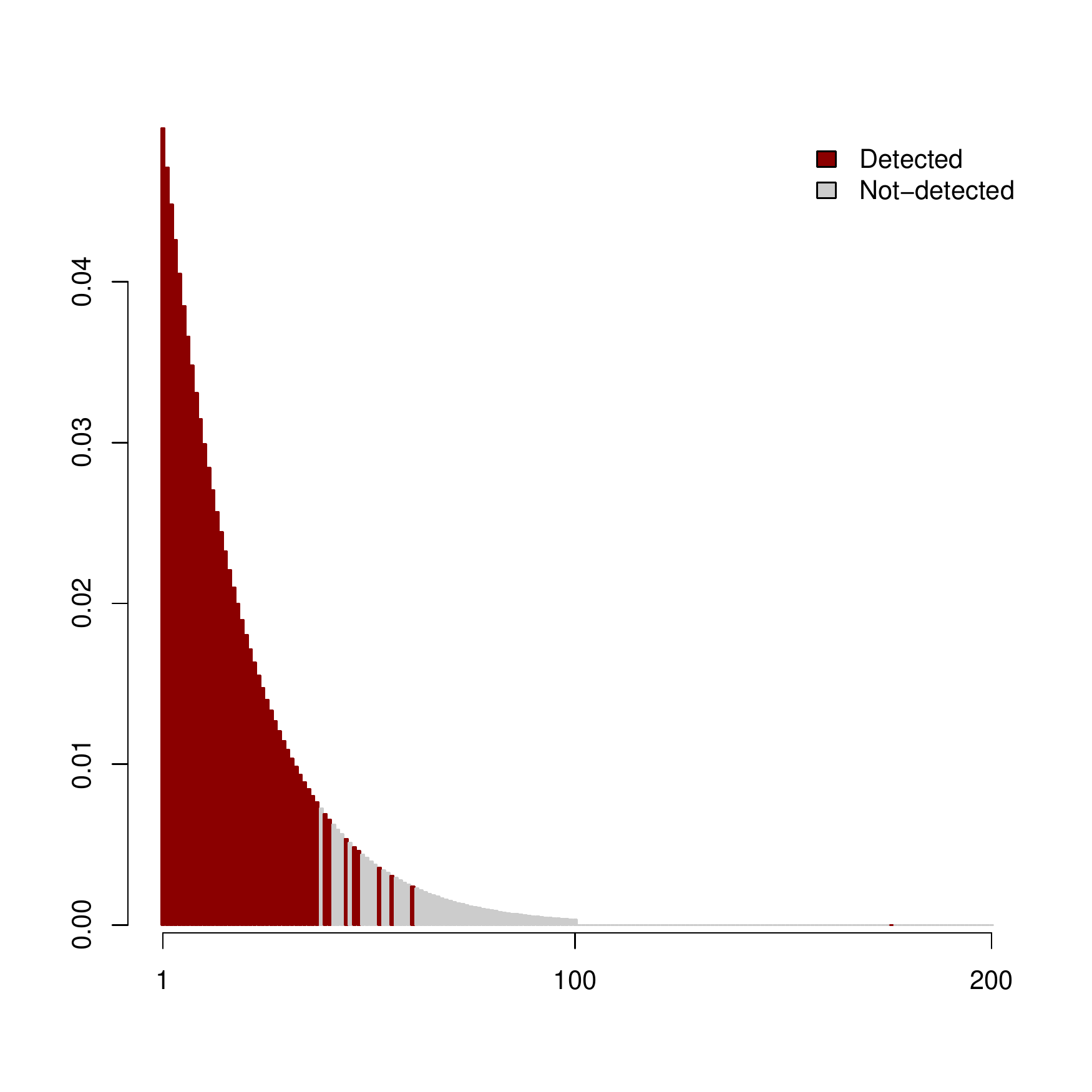}
\caption{Population of strings with their true frequencies on the vertical axis (0.01 is 1\%). Strings detected by \RAPPOR{} are shown in dark red.}
\label{fig:detected}
\end{figure}

\begin{table}[!h]
\centering
\begin{tabular}{lcccccr}
  \hline
String & Est.\ & Stdev\ & P.value & Truth & Prop.\ & SNR \\ 
  \hline
V\_1 & 48803 & 2808 & 5.65E-63 & 49884 & 0.05 & 17.38 \\ 
  V\_2 & 47388 & 2855 & 5.82E-58 & 47026 & 0.05 & 16.60 \\ 
  V\_5 & 41490 & 2801 & 4.30E-47 & 40077 & 0.04 & 14.81 \\ 
  V\_7 & 40682 & 2849 & 4.58E-44 & 36565 & 0.04 & 14.28 \\ 
  V\_4 & 40420 & 2811 & 1.31E-44 & 42747 & 0.04 & 14.38 \\ 
  V\_3 & 39509 & 2882 & 7.03E-41 & 44642 & 0.04 & 13.71 \\ 
  V\_8 & 36861 & 2842 & 5.93E-37 & 34895 & 0.03 & 12.97 \\ 
  V\_6 & 36220 & 2829 & 4.44E-36 & 38231 & 0.04 & 12.80 \\ 
  V\_10 & 34196 & 2828 & 1.72E-32 & 31234 & 0.03 & 12.09 \\ 
  V\_9 & 32207 & 2805 & 1.45E-29 & 33106 & 0.03 & 11.48 \\ 
  V\_12 & 30688 & 2822 & 9.07E-27 & 28295 & 0.03 & 10.87 \\ 
  V\_11 & 29630 & 2831 & 5.62E-25 & 29908 & 0.03 & 10.47 \\ 
  V\_14 & 27366 & 2850 & 2.33E-21 & 25984 & 0.03 & 9.60 \\ 
  V\_19 & 23860 & 2803 & 3.41E-17 & 20057 & 0.02 & 8.51 \\ 
  V\_13 & 22327 & 2826 & 4.69E-15 & 26913 & 0.03 & 7.90 \\ 
  V\_15 & 21752 & 2825 & 2.15E-14 & 24653 & 0.02 & 7.70 \\ 
  V\_20 & 20159 & 2821 & 1.26E-12 & 19110 & 0.02 & 7.15 \\ 
  V\_18 & 19521 & 2835 & 7.74E-12 & 20912 & 0.02 & 6.89 \\ 
  V\_17 & 18387 & 2811 & 7.86E-11 & 22141 & 0.02 & 6.54 \\ 
  V\_21 & 18267 & 2828 & 1.33E-10 & 17878 & 0.02 & 6.46 \\ 
   \hline
\end{tabular}
\caption{Top-20 strings with their estimated frequencies, standard deviations, p-values, true counts and signal to noise ratios (SNR or z-scores).}
\label{tab:results20}
\end{table}

\subsection{Reporting on Windows Process Names}\label{sec:windows}
We collected 186,792 reports from 10,133 different Windows computers, sampling actively running
processes on each machine. On average, just over 18 process names were collected from each machine with the goal of recovering the most common ones and estimating the frequency of a particularly malicious binary named ``BADAPPLE.COM''.

This collection used 128 Bloom filter with 2 hash functions and 8 cohorts. Privacy parameters were chosen such that $\epsilon_1 = 1.0743$ with $q = 0.75$, $p = 0.5,$ and $f = 0.5$. Given this configuration, we optimistically expected to discover processes with frequency of at least 1.5\%.

We identified 10 processes shown in Table \ref{tab:proc} ranging in frequency between 2.5\% and 4.5\%. They were identified by controlling the False Discovery Rate at 5\%. The ``BADAPPLE.COM'' process was estimated to have frequency of 2.6\%. The other 9 processes were common Windows tasks we would expect to be running on almost every Windows machine.

\begin{table}[h]
\centering
\caption{Windows processes detected.}
\begin{tabular}{lcccc}
  \hline
Process Name & Est.\ & Stdev & P.value & Prop.\ \\ 
  \hline
RASERVER.EXE & 8054 & 1212 & 1.56E-11 & 0.04    \\ 
  RUNDLL32.EXE & 7488 & 1212 & 3.32E-10 & 0.04  \\ 
  CONHOST.EXE & 7451 & 1212 & 4.02E-10 & 0.04   \\ 
  SPPSVC.EXE & 6363 & 1212 & 7.74E-08 & 0.03    \\ 
  AITAGENT.EXE & 5579 & 1212 & 2.11E-06 & 0.03  \\ 
  MSIEXEC.EXE & 5147 & 1212 & 1.10E-05 & 0.03   \\ 
  SILVERLIGHT.EXE & 4915 & 1212 & 2.53E-05 & 0.03  \\ 
  BADAPPLE.COM & 4860 & 1212 & 3.07E-05 & 0.03  \\ 
  LPREMOVE.EXE & 4787 & 1212 & 3.95E-05 & 0.03  \\ 
  DEFRAG.EXE & 4760 & 1212 & 4.34E-05 & 0.03    \\ 
   \hline
\end{tabular}
\label{tab:proc}
\end{table}

\subsection{Reporting on Chrome Homepages}\label{sec:chromehome}
The Chrome Web browser has implemented and deployed \RAPPOR{} 
to collect data about Chrome clients~\cite{ChromeRAPPORpage}.
Data collection has been limited to
some of the Chrome users who have opted in to send usage statistics to Google,
and to certain Chrome settings,
with daily collection 
from approximately $\sim$14 million respondents.

Chrome settings, such as homepage, search engine and others, are often targeted by malicious software and changed without users' consent. To understand who the main players are, it is critical to know the distribution of these settings on a large number of Chrome installations. Here, we focus on learning the distribution of homepages and demonstrate what can be learned from a dozen million reports with strong privacy guarantees.

This collection used 128 Bloom filter with 2 hash functions and 32 cohorts. Privacy parameters were chosen such that $\epsilon_1 = 0.5343$ with $q = 0.75$, $p = 0.5,$ and $f = 0.75$. 
Given this configuration, optimistically, \RAPPOR{} analysis can discover homepage URL domains, with statistical confidence, if their frequency exceeds 0.1\% of the responding population.
Practically, this means that more than $\sim$14 thousand clients must report on the same URL domain, before it can be identified in the population by \RAPPOR{} analysis.

Figure~\ref{fig:detectedtwo} shows the relative frequencies of 31 unexpected homepage domains discovered by \RAPPOR{} analysis.
(Since not all of these are necessarily malicious, the figure does not include the actual URL domain strings that were identified.)
As one might have expected, there are several popular homepages, likely intentionally set by users, along with a long tail of relatively rare URLs. Even though less than 0.5\% out of 8,616 candidate URLs provide enough statistical evidence for their presence (after the FDR correction), they collectively account for about 85\% of the total probability mass. 

\begin{figure}[!t]
\centering
\includegraphics[width=\columnwidth]{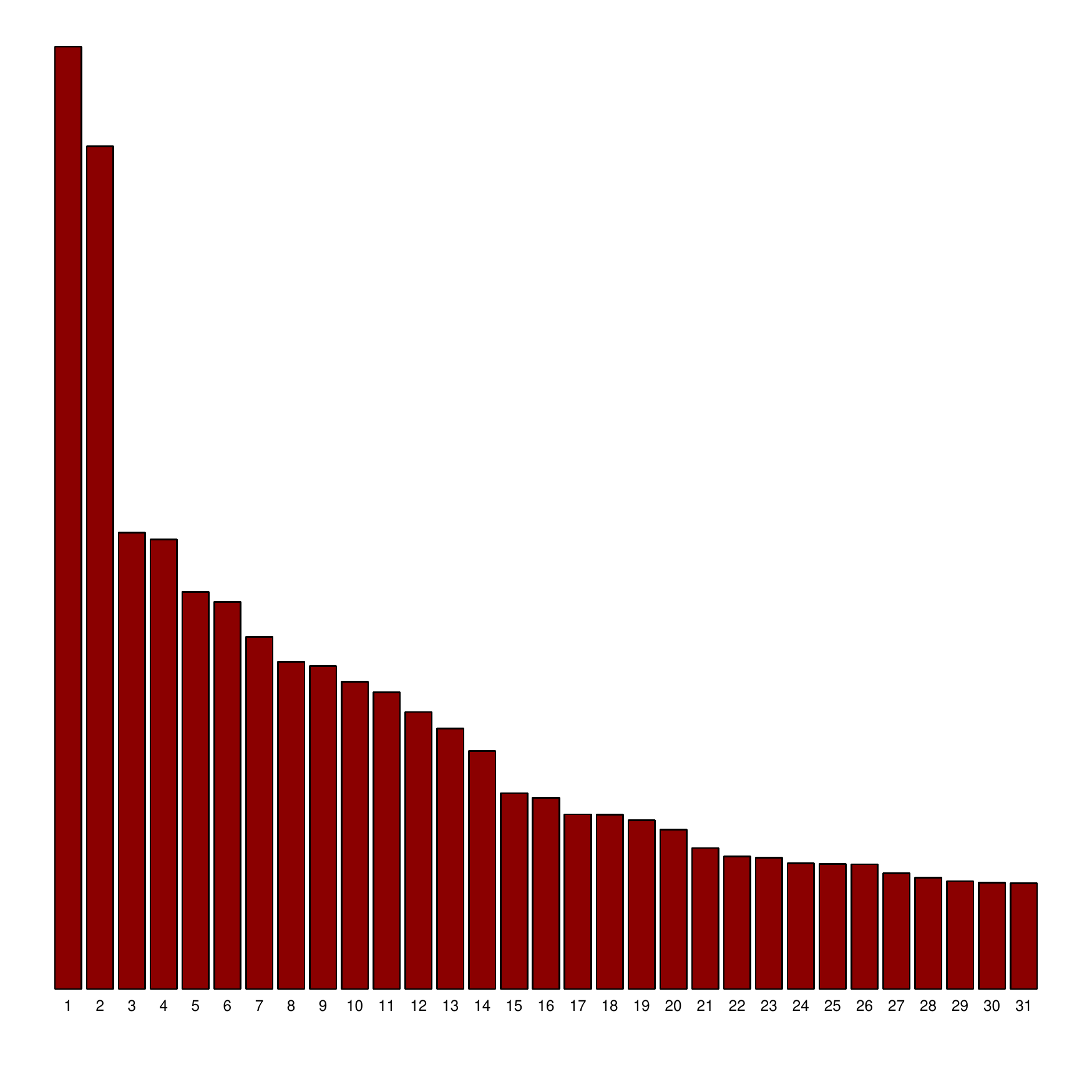}
\caption{Relative frequencies of the top 31 unexpected Chrome homepage domains found by analyzing $\sim$14 million \RAPPOR{} reports, excluding expected domains
(the homepage ``google.com'', etc.).}
\label{fig:detectedtwo}
\end{figure}

\section{Attack Models and Limitations}\label{sec:attacks}
We consider three types of attackers with different capabilities for collecting \RAPPOR{} reports. 

The least powerful attacker has access to a single report from each user and is limited by one-time differential privacy level $\epsilon_1$ on how much knowledge gain is possible. 
This attacker corresponds to an eavesdropper that has temporary ability to snoop on the users' reports.

A windowed attacker is presumed to have access to one client's data over a well-defined period of time. This attacker, depending on the sophistication of her learning model, could learn more information about a user than the attacker of the first type. Nevertheless, the improvement in her ability to violate privacy is strictly bounded by the longitudinal differential privacy guarantee of $\epsilon_{\infty}$. 
This more powerful attacker may correspond to an adversary such as a malicious Cloud service employee, who may have temporary access to reports, or access to a time-bounded log of reports.

The third type of attacker is assumed to have unlimited collection capabilities and can learn the Permanent randomized response $B'$ with absolute certainty. Because of the randomization performed to obtain $B'$ from $B$, she is also bounded by the privacy guarantee of $\epsilon_{\infty}$ and cannot improve upon this bound with more data collection.
This corresponds to a worst-case adversary, but still one that doesn't have direct access to the true data values on the client.

Despite envisioning a completely local privacy model, one where users themselves release data in a privacy-preserving fashion,
operators of \RAPPOR{} collections, however, can easily manipulate the process to learn more information than warranted by the
nominal $\epsilon_{\infty}$. Soliciting users to participate more than once in a particular collection results in multiple Permanent
randomized responses for each user and partially defeats the benefits of memoization. In the web-centric world, users use multiple accounts
and multiple devices and can unknowingly participate multiple times, releasing more information than what they expected.
This problem could be mitigated to some extent by running collections per account and sharing a common Permanent randomized response.
Notice the role of the operator to ensure that such processes are in place and the required or assumed trust on the part of the user. 

It is likely that some attackers will aim to target specific users 
by isolating and analyzing reports from that user, or a small group of users that includes them.
Even so, some randomly-chosen users need not fear such attacks at all:
with probability $\left(\frac{1}{2}f\right)^h$, clients will generate a Permanent randomized response $B'$ with all 0s at the positions of set Bloom filter bits. Since these clients are not contributing any useful information to the collection process, targeting them individually by an attacker is counter-productive. An attacker has nothing to learn about this particular user. 
Also, for all users, at all times, there is plausible deniability proportional to the fraction of clients providing no information.

\begin{figure}[t]
\begin{center}
\includegraphics[trim=0 0.2in 0 0,clip=true,width=\columnwidth]{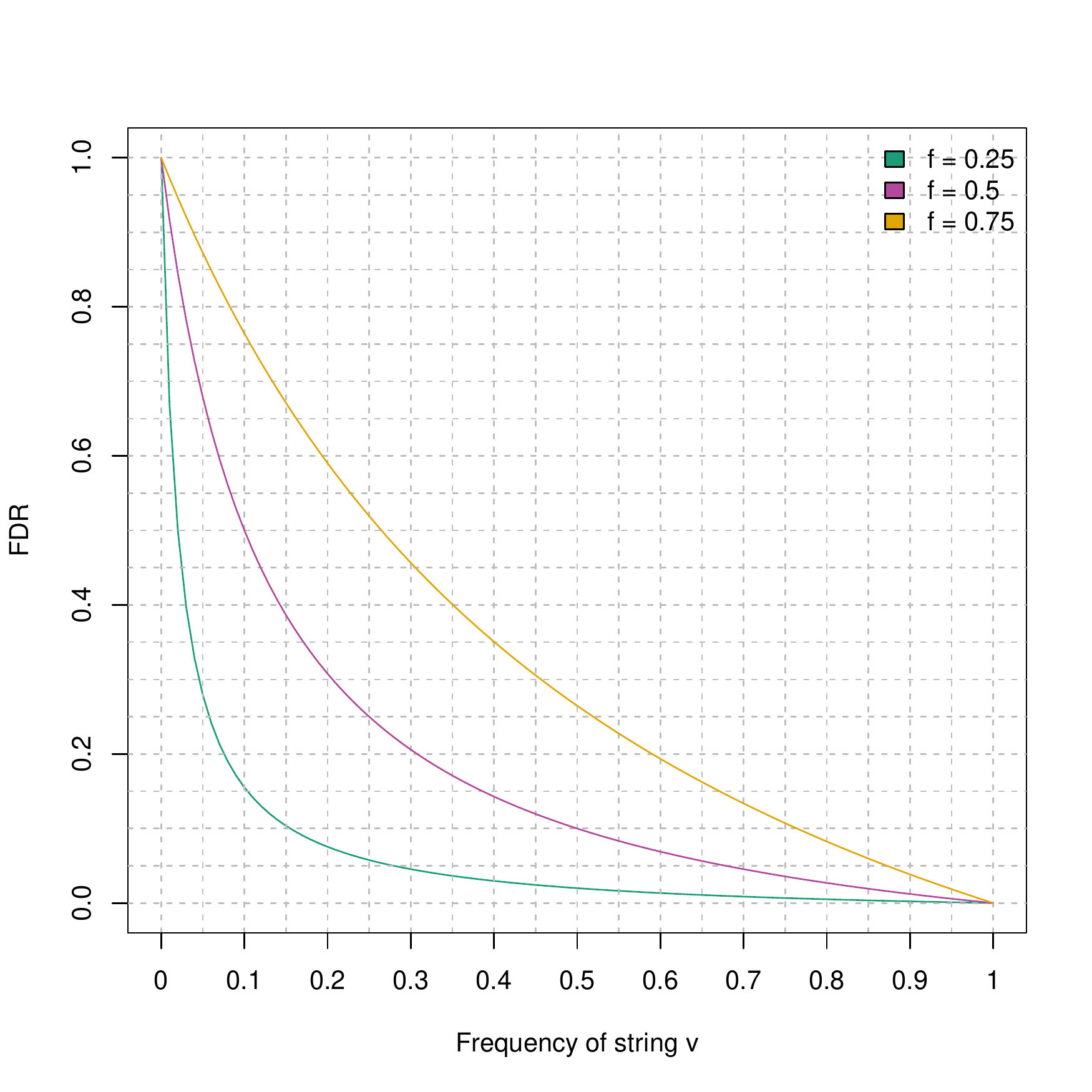}
\caption{False Discovery Rate (FDR) as a function of string frequency and $f$. Identifying rare strings in a population without introducing a large number of false discoveries is infeasible. Also, FDR is proportional to $f$.}
\label{fig:fdr}
\end{center}
\end{figure}

\begin{figure}[t]
\begin{center}
\includegraphics[trim=0 0.2in 0 0,clip=true,width=\columnwidth]{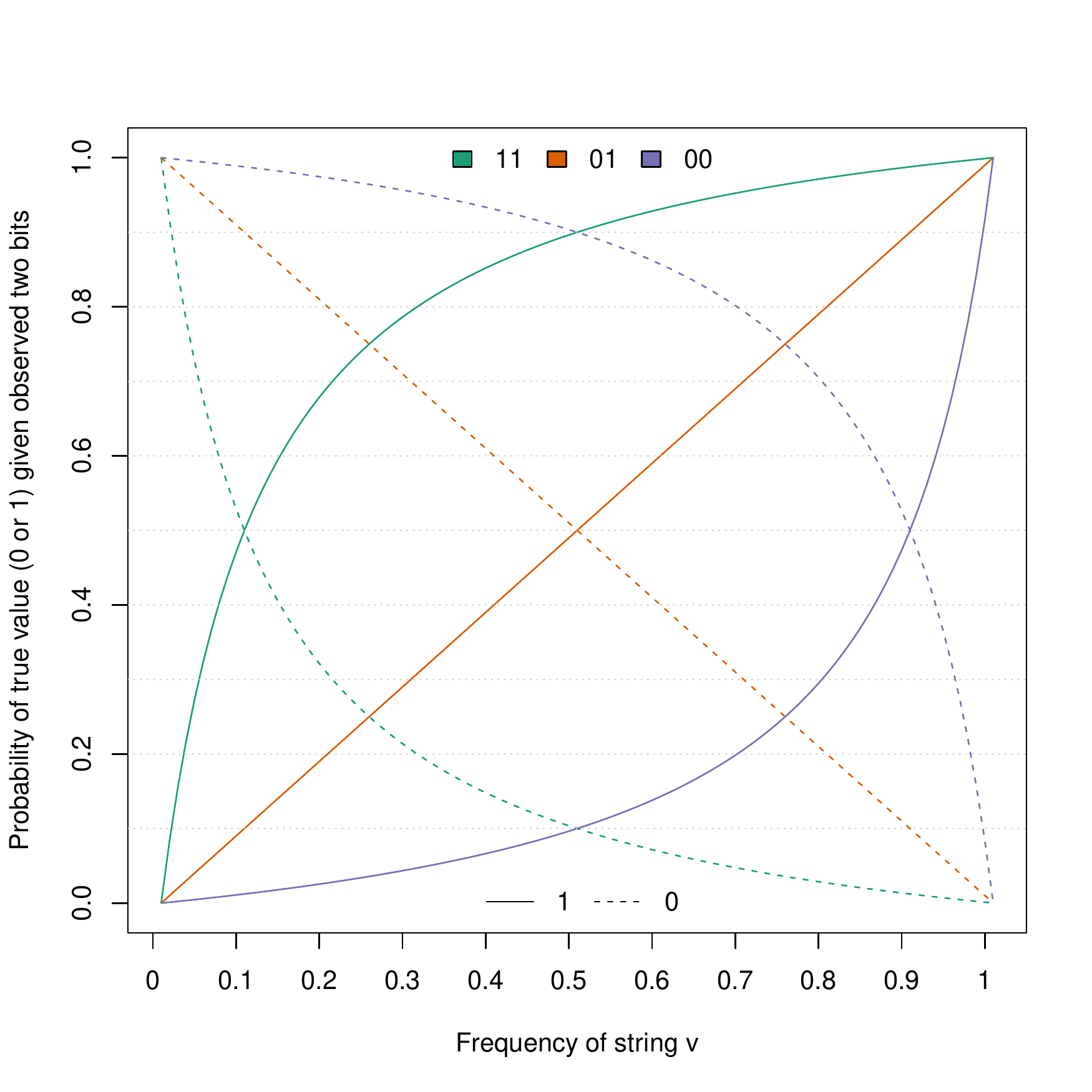}
\caption{Exact probabilities for inferring the true value $v$ given the two bits observed in a \RAPPOR{} report $S$ corresponding to the two bits set by string $v$. For rare strings, even when both bits are set to 1 (green lines), it is still much more likely that the client \emph{did not} report $v$, but some other value.}
\label{fig:fdr2}
\end{center}
\end{figure}

In one particular attack scenario, 
imagine an attacker that is interested in learning whether a given client has a particular value $v$,
whose population frequency is known to be $f_v$. 
The strongest evidence in support of $v$ comes in the form of both Bloom filter bits for $v$ being set in the client's report (if two hash functions are used). 
The attacker can formulate its target set by selecting all reports with these two bits set. 
However, this set will miss some clients with $v$ and include other clients who did not report $v$. 
False discovery rate (FDR) is the proportion of clients in the target set who reported a value different from $v$. 
Figure \ref{fig:fdr} shows FDR as a function of $f_v$, the frequency of the string $v$.
Notably,
for relatively rare values, most clients in the target set will, in fact, have a value that is different from $v$,
which will hopefully deter any would-be attackers.

The main reason for the high FDR rate at low frequencies $f_v$ stems from the limited evidence provided by the observed bits in support of $v$. 
This is clearly illustrated by 
Figure \ref{fig:fdr2} where the probability that $v$ was reported (1) or not reported (0) by the client is plotted as a function of $f_v$. 
For relatively rare strings (those with less than 10\% frequency), even when both bits corresponding to $v$ are set in the report, 
the probability of $v$ being reported is much smaller than of it not being reported. 
Because the prior probability $f_v$ is so small, a single client's reports cannot provide sufficient evidence in favor of $v$.

\subsection{Caution and Correlations}
Although it advances the state of the art, 
\RAPPOR{} is not a panacea, but rather simply a tool
that can provide significant benefits
when used cautiously, and correctly, 
using parameters appropriate to its application context.
Even then,
\RAPPOR{} should be used only 
as part of a comprehensive privacy-protection strategy,
which should include
limited data retention and
other pragmatic processes mentioned in Section~\ref{sec:motivation},
and already in use by Cloud operators.

As in previous work on differential privacy for database records, \RAPPOR{} provides privacy guarantees for the responses from individual clients. One of the limitations of our approach has to do with ``leakage'' of additional information when respondents use several clients that participate in the same collection event. In the real world, this problem is mitigated to some extent by intrinsic difficulty of linking different clients to the same participant. Similar issues occur when highly correlated, or even exactly the same, predicates are collected at the same time. This issue, however, can be mostly handled with careful collection design.

Such inadvertent correlations can arise in many different ways in \RAPPOR{} applications, in each case possibly leading to the collection of too much correlated information from a single client, or user, and a corresponding degradation of privacy guarantees.
Obviously, this may be more likely to happen if \RAPPOR{} reports are collected, from each client, on too many different client properties.
However, it may also happen in more subtle ways.
For example, the number of cohorts used in the collection design must be carefully selected and changed over time,
to avoid privacy implications;
otherwise, cohorts may be so small  as to facilitate the tracking of clients,
or clients may report as part of different cohorts over time,
which will reduce their privacy.
\RAPPOR{} responses can even affect client anonymity,
when they are collected on immutable client values that are the same across all clients:
if the responses contain too many bits (e.g., the Bloom filters are too large),
this can facilitate tracking clients,
since the bits of the Permanent randomized responses are correlated.
Some of these concerns may not apply in practice
(e.g., tracking responses may be infeasible, because of encryption),
but all must be considered in \RAPPOR{} collection design.

In particular, longitudinal privacy protection guaranteed by the Permanent randomized response assumes that client's value does not change over time. It is only slightly violated if the value changes very slowly. In a case of rapidly changing, correlated stream of values from a single user, additional measures must be taken to guarantee longitudinal privacy. The practical way to implement this would be to budget $\epsilon_{\infty}$ over time, spending a small portion on each report. In the \RAPPOR{} algorithm this would be equivalent to letting $q$ get closer and closer to $p$ with each collection event.

Because differential privacy deals with the worst-case scenario, the uncertainty introduced by the Bloom filter does not play any role in the calculation of its bounds. Depending on the random draw, there may or may not be multiple candidate strings mapping to the same $h$ bits in the Bloom filter. For the average-case privacy analysis, however, Bloom filter does provide additional privacy protection (a flavor of $k$-anonymity) because of the difficulty in reliably inferring a client's value $v$ from its Bloom filter representation $B$ \citep{bloom_privacy}.

\section{Related Work}
Data collection from clients in a way that preserves their privacy and at the same time enables meaningful aggregate inferences is an active area of research both in academia and industry.
Our work fits into the category of problems recently explored by \cite{Hsu2012, SafeZones, Chan2012, Liu2012}, where an untrusted aggregator wishes to learn the ``heavy hitters" in the clients' data or run certain types of learning algorithms on the aggregated data, while guaranteeing the privacy of each contributing client and, in some cases, restricting the size of the communication from the client to the untrusted aggregator. Our contribution is to suggest an alternative to those already explored that is intuitive, easy-to-implement, and potentially more suitable to certain learning problems, and to provide a detailed statistical decoding methodology for our approach, as well as experimental data on its performance. Furthermore, in addition to guaranteeing differential privacy, we make explicit algorithmic steps towards protection against linkability across reports from the same user.

It is natural to ask why we built our mechanisms upon randomized response, rather than upon two primitives most commonly used to achieve differential privacy: the Laplace and Exponential mechanisms~\cite{dwork06,McSherryT07}. 
The Laplace mechanism is not suitable because the client's reported values may be categorical, rather than numeric, in which case direct noise addition does not make semantic sense. The Exponential mechanism is not applicable due to our desire to implement the system in a local model, where the privacy is ensured by each client individually without a need for a trusted third party. In that case, the client does not have sufficient information about the data space in order to do the necessary biased sampling required by the Exponential mechanism. Finally, randomized response has the additional benefit of being relatively easy to explain to the end user, making the reasoning about the algorithm used to ensure privacy more accessible than other mechanisms implementing differential privacy.

Usage of various dimensionality reduction techniques in order to improve the privacy properties of algorithms while retaining utility is also fairly common~\cite{Liu2012, JL, AggarwalY07, MirMNW11}. Although our reliance on Bloom filters is driven by a desire to obtain a compact representation of the data in order to lower each client's potential transmission costs and the desire to use technologies that are already widely adopted in practice~\cite{BroderM03}, the related work in this space with regards to privacy~\cite{bloom_privacy} may be a source for optimism as well. 
It is conceivable that through a careful selection of hash functions, 
or choice of other Bloom filter parameters, 
it may be possible to further raise privacy defenses against attackers,
although we have not explored that direction in much detail.

The work most similar to ours is by Mishra and Sandler~\cite{MishraS06}. One of the main additional contributions of our work is the more extensive decoding step, that provides both experimental and statistical analyses of collected data for queries that are more complex than those considered in their work. The second distinction is our use of the second randomization step, the Instantaneous randomized response, in order to make the task of linking reports from a single user difficult, along with more detailed models of attackers' capabilities.

The work of~\cite{OurData} approaches the challenge of eliminating the need for a trusted aggregator with a distributed solution that places trust in other clients instead. \cite{chen2012towards} and \cite{Akkus2012} implement a differentially private protocol over distributed user data by relying on an honest-but-curious proxy or data aggregator bound by certain commitments.

Several lines of work aim to address the question of longitudinal data collection with privacy. The work of \cite{MedianMech} considers scenarios when many predicate queries are asked against the same dataset, and it uses an approach that, rather than providing randomization for each answer separately, attempts to reconstruct the answer to some queries based on the answers previously given to other queries. The high-level idea of \RAPPOR{} bears some resemblance to this technique--the Instantaneous randomized response is reusing the result of the Permanent randomized response step. However, the overall goal is different---rather than answering a diverse number of queries, \RAPPOR{} collects reports to the same query over data that may be changing over time. 
Although it does not operate under the same local model as \RAPPOR{}, recent work by \cite{DworkNPRY10} on pan-private streaming and by~\cite{Dwork2010} on privacy under continual observation introduces additional ideas relevant for the longitudinal data collection with privacy.
 







\section{Summary}
\RAPPOR{} is a flexible, mathematically rigorous and practical platform for anonymous data collection
for the purposes of privacy-preserving crowdsourcing of population statistics on client-side data.
\RAPPOR{} gracefully handles multiple data collections from the same client by providing well-defined longitudinal differential privacy guarantees. Highly tunable parameters allow to balance risk versus utility over time, depending on one's needs and assessment of likelihood of different attack models. \RAPPOR{} is purely a client-based privacy solution. It eliminates the need for a trusted third-party server and puts control over client's data back into their own hands.

\paragraph{Acknowledgements.}
The authors would like to thank our many colleagues at Google and its Chrome team who have helped with this work, with special thanks due to Steve Holte and Moti Yung.
Thanks also to the CCS reviewers, and many others who have provided insightful feedback on the ideas, and this paper, 
in particular, Frank McSherry, Arvind Narayanan, Elaine Shi, and Adam D.\ Smith.

\bibliographystyle{plainnat}
\bibliography{submission}


\section*{Appendix}
\begin{obs}\label{obs-ratios}
For $a, b \geq 0$ and $c, d > 0: \frac{a+b}{c+d} \leq \max(\frac{a}{c}, \frac{b}{d})$.
\end{obs}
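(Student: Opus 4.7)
The plan is a direct algebraic argument that exploits the symmetry of the inequality in the pairs $(a,c)$ and $(b,d)$. I would begin by assuming without loss of generality that $b/d \geq a/c$, so that the right-hand side equals $b/d$; the opposite case follows by swapping the roles of $(a,c)$ and $(b,d)$. Both ratios are well-defined because $c,d > 0$.

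Under this assumption, cross-multiplying by the positive quantity $cd$ gives $ad \leq bc$. Adding $bd$ to both sides yields $(a+b)d \leq (c+d)b$, and dividing through by the positive quantity $d(c+d)$ produces $\frac{a+b}{c+d} \leq \frac{b}{d} = \max\bigl(\frac{a}{c}, \frac{b}{d}\bigr)$, as required.

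I do not anticipate any real obstacle here: the positivity hypotheses on $c$ and $d$ justify every division and cross-multiplication, and the boundary cases ($a=0$ or $b=0$) are absorbed by the same calculation. The only point worth noting is that the inequality can in fact be tight—taking $a/c = b/d$ makes both sides equal—so the bound in the ``$\max$'' form is the best one obtainable without further assumptions. This observation is then invoked in the proofs of Theorems~\ref{thm-one} and the Instantaneous randomized response theorem to replace a sum of ratios by the worst individual ratio, which is precisely the step that turns an integrated privacy bound into a pointwise one.
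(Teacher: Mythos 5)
Your proof is correct and rests on the same algebraic core as the paper's: reduce to the comparison of $ad$ and $bc$ via cross-multiplication, using positivity of $c$ and $d$. The only cosmetic difference is that the paper argues by contradiction while you argue directly, so this is essentially the same proof.
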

\begin{proof} 
Assume wlog that $\frac{a}{c} \geq \frac{b}{d}$, and suppose the statement is false, i.e., $\frac{a+b}{c+d} > \frac{a}{c}$. Then $ac + bc > ac + ad$ or $bc > ad$, a contradiction with assumption that $\frac{a}{c} \geq \frac{b}{d}$.
\end{proof}

\subsection*{Deriving Limits on Learning}
We consider a Basic One-time \RAPPOR{} algorithm to establish theoretical limits on what can be learned using a particular parameter configuration and a number of collected reports $N$. Since the Basic One-time  \RAPPOR{} is more efficient (lossless) than the original \RAPPOR{}, the following provides a strict upper bound for all \RAPPOR{} modifications.

\vfill\eject

Decoding for the Basic \RAPPOR{} is quite simple. Here, we assume that $f = 0$. The expected number that bit $i$ is set in a set of reports, $C_i$, is given by
$$
E(C_i) = qT_i + p(N - T_i),
$$
where $T_i$ is the number of times bit $i$ was truly set (was the signal bit). This immediately provides the estimator
$$
\hat{T}_i = \frac{C_i - pN}{q - p}.
$$

It can be shown that the variance of our estimator under the assumption that $T_i = 0$ is given by
$$
Var(\hat{T}_i) = \frac{p(1 - p)N}{(q - p)^2}.
$$

Determining whether $T_i$ is larger than 0 comes down to statistical hypothesis testing with $H_0: T_i = 0$ vs $H_1: T_i > 0$. Under the null hypothesis $H_0$ and letting $p = 0.5$, the standard deviation of $T_i$ equals
$$
\text{sd}(\hat{T}_i) = \frac{\sqrt{N}}{2q - 1}.
$$

We reject $H_0$ when
\begin{eqnarray*}
\hat{T}_i & > & Q \times \text{sd}(\hat{T}_i) \\
          & > & \frac{Q\sqrt{N}}{2q - 1},
\end{eqnarray*}
where $Q$ is the critical value from the standard normal distribution $Q = \Phi^{-1}(1 - \frac{0.05}{M})$ ($\Phi^{-1}$ is the inverse of the standard Normal cdf). Here, $M$ is the number of tests; in this case, it is equal to $k$, the length of the bit array. Dividing by $M$, the Bonferroni correction, is necessary to adjust for multiple testing to avoid a large number of false positive findings.

Let $x$ be the largest number of bits for which this condition is true (i.e., rejecting the null hypothesis). $x$ is maximized when $x$ out of $M$ items have a uniform distribution and a combined probability mass of almost 1. The other $M - x$ bits have essentially 0 probability. In this case, each non-zero bit will have frequency $1 / x$ and its expected count will be $E(\hat{T}_i) = N / x \text{ }\forall i$.

Thus we require
$$
\frac{N}{x} > \frac{Q\sqrt{N}}{2q - 1},
$$
where solving for $x$ gives
$$
x \le \frac{(2q - 1)\sqrt{N}}{Q}.
$$

\end{document}